\newtheorem{theorem}{Theorem}
\newtheorem{assumption}{Assumption}
\newtheorem{definition}{Definition}
\DeclareSymbolFont{cyrletters}{OT2}{wncyr}{m}{n}
\DeclareMathSymbol{\Sha}{\mathalpha}{cyrletters}{"58}
\def\BibTeX{{\rm B\kern-.05em{\sc i\kern-.025em b}\kern-.08em
 T\kern-.1667em\lower.7ex\hbox{E}\kern-.125emX}}
\begin{document}

\title{On solving infinite-dimensional\\ Toeplitz Block LMIs}
\author{Flora Vernerey, Pierre Riedinger and Jamal Daafouz\\% <-this % stops a space
\thanks{This work is supported by HANDY project ANR-18-CE40-0010-02.}% <-this % stops a space
\thanks{The authors are with Universit\'e de Lorraine, CNRS, CRAN, F-54000 Nancy, France.}}
\maketitle
%%%%%%%%%%%%%%%%%%%%%%%%%%%%%%%%%%%%%%%%%%%%%%%%%%%%%%%%%%%%%%%%%%%%%%%%%%%%%%%%
\begin{abstract}
This paper focuses on the resolution of infinite-dimensional Toeplitz Block LMIs, which are frequently encountered in the context of stability analysis and control design problems formulated in the harmonic framework. We propose \textcolor{red}{a consistent truncation method that makes this infinite dimensional problem tractable} and demonstrate that a solution to the truncated problem can always be found at any order, provided that the original infinite-dimensional Toeplitz Block LMI problem is feasible. Using this approach, we illustrate how the infinite dimensional solution of a Toeplitz Block LMI based convex optimization problem can be recovered up to \textcolor{red}{an arbitrarily} small error, by solving a finite dimensional truncated problem. The obtained results are applied to stability analysis and harmonic LQR for linear time periodic (LTP) systems. 
\end{abstract}
%\begin{IEEEkeywords}
%, Dynamic phasors, Harmonic modeling and control
%\end{IEEEkeywords}

%%%%%%%%%%%%%%%%%%%%%%%%%%%%%%%%%%%%%%%%%%%%%%%%%%%%%%%%%%%%%%%%%%%%%%%%%%%%%%%%
\section{Introduction}
LMIs are a powerful and versatile tool that can be used to solve a broad range of problems in science and engineering, including control theory, optimization, signal processing, and robotics. One specific type of LMIs is the Infinite-dimensional Toeplitz Block LMIs (TBLMI), which involve matrices of infinite dimension with a Toeplitz block structure. TBLMIs are encountered in the context of harmonic analysis and control, a topic of great theoretical and practical interest in numerous application domains, including energy management and embedded systems to mention few \cite{Farkas,Bolzern,Sanders,Wereley_1990,Zhou,Zhou2008,Almer2}. 

Solving TBLMIs poses a significant challenge due to the infinite dimensionality. The issue we tackle in this paper is different from the problem previously examined in \cite{Ikeda01}, which aimed to reduce an infinite number of LMIs to a finite number of LMIs. In our case, the number of inequalities is finite but the entries and the unknowns are infinite-dimensional. To illustrate the challenges involved, recall the following fact (see \cite{Pierre2022} for more detail): "a truncated matrix of a Hurwitz infinite-dimensional harmonic matrix may not be Hurwitz at any truncation order". As a result, it is possible that solving the truncated version of an infinite-dimensional harmonic Lyapunov equation may not yield a positive definite solution.

In \cite{Pierre2022}, efficient algorithms and methods that leverage the Toeplitz structure have been proposed to determine the infinite-dimensional solution to harmonic Lyapunov or Riccati equations with arbitrarily small error. In this paper, we aim to expand upon this new approach and extend it to the TBLMI framework. \textcolor{red}{To the authors knowledge, it is the first time that this problem is raised.} Our objective is to define a truncated version of the original problem which enables the recovery of the infinite-dimensional solution with arbitrary accuracy.  \textcolor{red}{Contrarily to the literature on the subject \cite{Wereley_1990, Zhou,Zhou2008}, these new results do not invoke Floquet theory. The latter is of interest for stability analysis of LTP systems but it is limited for control design purposes \cite{Pierre2022}}. 

The paper is organized as follows. The next section is dedicated to mathematical preliminaries. In Section III, we define what we call a TBLMI and we give the problem formulation. The main results are established in Section IV where we investigate the truncation of infinite-dimensional TBLMIs that preserves solution positiveness. We show how to recover the infinite-dimensional solution to a TBLMI-based convex optimization problem, to an arbitrarily small error, by solving a finite dimensional truncated problem. We illustrate the results of this paper in section V and apply the proposed procedure to design a harmonic LQR for linear time periodic systems.

{\bf Notations: } The transpose of a matrix $A$ is denoted $A'$ and $A^*$ denotes the complex conjugate transpose $A^*=\bar A'$. The $n$-dimensional identity matrix is denoted $Id_n$. The infinite identity matrix is denoted $\mathcal{I}$. For $m\in\mathbb{Z}^+\cup \{\infty\}$, the flip matrix $J_m$ is the $(2m+1) \times (2m+1)$ matrix having 1 on the anti-diagonal and zeros elsewhere. 
$C^a$ denotes the space of absolutely continuous function,
$L^{p}([a\ b],\mathbb{C}^n)$ (resp. $\ell^p(\mathbb{C}^n)$) denotes the Lebesgues spaces of $p-$integrable functions on $[a, b]$ with values in $\mathbb{C}^n$ (resp. $p-$summable sequences of $\mathbb{C}^n$) for $1\leq p\leq\infty$. $L_{loc}^{p}$ is the set of locally $p-$integrable functions. The notation $f(t)=g(t)\ a.e.$ means almost everywhere in $t$ or for almost every $t$. 
To simplify the notations, $L^p([a,b])$ or $L^p$ will be often used instead of $L^p([a,b],\mathbb{C}^n)$. 
%For example, $x\in L^2([a,b])$ means $x \in L^2([a,b],\mathbb{C}^n)$. %We denote by $col(X)$ the vectorization of a matrix $X$, formed by stacking the columns of $X$ into a single column vector. Finally, $<\cdot,\cdot>$ refers to the scalar product in $\ell^2$.
%%%%%%%%%%%%%%%%%%%%%%%%%%%%%%%%%%%%%%%%%%%%%%%%%%%%%%%%%%%%%%%%%%%%%%%%%%%%%%%%
\vspace{-.3cm}
\section{Preliminaries}
%Metric units are preferred for use in IEEE publications in light of their
%international readership and the inherent convenience of these units in many fields.
%In particular, the use of the International System of Units (SI Units) is advocated.
% This system includes a subsystem the MKSA units, which are based on the
% meter, kilogram, second, and ampere. British units may be used as secondary units
% (in parenthesis). An exception is when British units are used as identifiers in trade,
% such as, 3.5 inch disk drive.
%We start by recalling some preliminaries related to Toeplitz block matrices, sliding Fourier decomposition in the context of harmonic modeling and the trace operator.
\subsection{Infinite dimensional Toeplitz block (\textcolor{red}{TB}) matrices}
 \color{red}
The Toeplitz transformation of a $T-$periodic function $a\in L^{2}([0 \ T], \mathbb{R})$, denoted $\mathcal{T}(a)$, defines a constant Toeplitz and infinite dimensional matrix as follows: 
\begin{align*}
	\mathcal{T}(a)=
	\left[
	\begin{array}{ccccc}
		\ddots & & \vdots & &\udots \\ & a_{0} & a_{-1} & a_{-2} & \\
		\cdots & a_{1} & a_{0} & a_{-1} & \cdots \\
		& a_{2} & a_{1} & a_{0} & \\
		\udots & & \vdots & & \ddots\end{array}\right],\end{align*}
where $(a_{k})_{k\in\mathbb{Z}}$ is the Fourier coefficient sequence of $a$.\\
From the subsequence $a^+=(a_k)_{k>0}$ and $a^-=(a_k)_{k<0}$ of $(a_k)_{k\in\mathbb{Z}}$, we also define the semi-infinite Hankel matrices:
\begin{align*}
	\mathcal{H}(a^+) &= (a_{i+j-1})_{{i,j}>0},\quad \mathcal{H}(a^-) = (a_{-i-j+1})_{{i,j}>0}.\end{align*}
Given an integer $m >0$, we define the $m-$truncation of $\mathcal{T}(a)$, denoted by $\mathcal{T}_m(a)$, the $(2m+1) \times (2m+1)$ principal submatrix of $\mathcal{T}(a)$. We denote  by $\mathcal{H}_{(p,q)}(a^+)$(resp. $\mathcal{H}_{(p,q)}(a^-)$) for any $p,q>0$, the $(2p+1)\times(2q+1)$ Hankel matrix obtained by selecting the first $(2p+1)$ rows and $(2q+1)$ columns of $\mathcal{H}(a^+)$(resp. $\mathcal{H}(a^-)$).
For clarity purpose, we provide in Fig.~\ref{fig20} a block decomposition of an infinite Toeplitz matrix $\mathcal{T}(a)$ to illustrate how the matrices defined above appear. Notice that the flip matrix $J_m$ is defined in the notation part.  
\begin{figure}[h]\begin{center}
		\includegraphics[width=\linewidth]{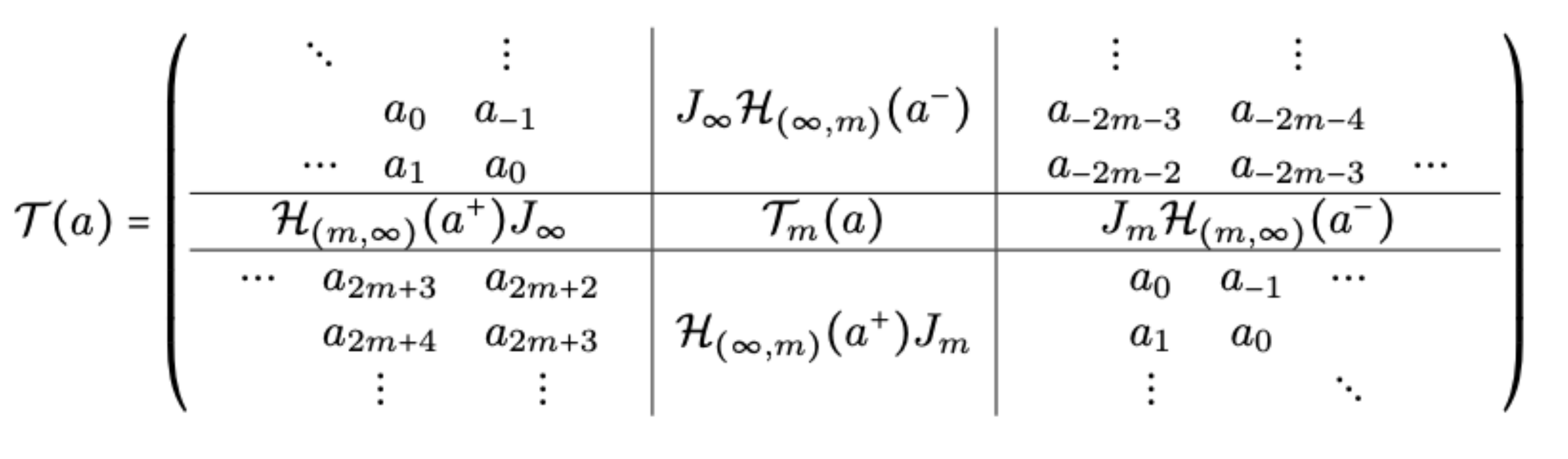}
		\caption{Block decomposition of an infinite Toeplitz matrix $\mathcal{T}(a)$.} \label{fig20}
	\end{center}
\end{figure}
The Toeplitz block transformation of a $T-$periodic $n\times n$ matrix function $A=(a_{ij})_{i,j=1,\cdots,n}\in L^{2}([0 \ T], \mathbb{R}^{n\times n})$, denoted $\mathcal{A}=\mathcal{T}(A)$, defines a constant $n\times n$ Toeplitz Block (\textcolor{red}{TB}) and infinite dimensional matrix: 
\begin{equation}
	\mathcal{A}=\left(\begin{array}{ccc}
		\mathcal{A}_{11} & \cdots & \mathcal{A}_{1n} \\
		\vdots & \ddots& \vdots \\
		\mathcal{A}_{n1} & \cdots & \mathcal{A}_{nn}\end{array}\right)\label{btop}\end{equation} where $\mathcal{A}_{ij}=\mathcal{T}(a_{ij})$, $i,j=1,\cdots,n$.
		
The $m-$truncation of the $n \times n$ Toeplitz block matrix $\mathcal{A}$ denoted by $\mathcal{T}_m(A)$, is defined by the $m-$truncation $\mathcal{T}_m(a_{ij})$ for all its entries $(i,j)$.
Similarly, the $n\times n$ Hankel block matrices $\mathcal{H}(A^+)$, $\mathcal{H}(A^-)$ are also defined respectively by $\mathcal{H}(A^+)_{ij}=\mathcal{H}(a^+_{ij})$ and $\mathcal{H}(A^-)_{ij}=\mathcal{H}(a^-_{ij})$ for $i,j=1,\cdots,n$. {Their principal submatrices $\mathcal{H}(A^+)_{(p,q)}$, $\mathcal{H}(A^-)_{(p,q)}$ for $p>0$, $q>0$ are obtained by considering the principal submatrices of the entries $\mathcal{H}(a^+_{ij})_{(p,q)}$ and $\mathcal{H}(a^-_{ij})_{(p,q)}$ for $i,j=1,\cdots,n$.}\\
\textcolor{red}{Recall that the product of  two \textcolor{red}{TB} matrices is a \textcolor{red}{TB} matrix only in infinite dimension. In finite dimension, we have the following result \cite{Pierre2022}}.
\begin{theorem} \label{product}Let $\mathcal{A}$, $\mathcal{B}$ be two $n \times n$ \textcolor{red}{TB} matrices and $\mathcal{C}=\mathcal{A}\mathcal{B}$. Then,
	\begin{align}
		\mathcal{T}_m(A)\mathcal{T}_m(B) &= \mathcal{T}_m(C)- \mathcal{H}_{(m,\eta)}(A^+)\mathcal{H}_{(\eta,m)}(B^-)\nonumber \\&- \mathcal{J}_{n,m}\mathcal{H}_{(m,\eta)}(A^-)\mathcal{H}_{(\eta,m)}(B^+) \mathcal{J}_{n,m}, \label{ee2}\end{align}
	where $\mathcal{J}_{n,m}=Id_n\otimes J_m$ and $\eta\in \mathbb{Z}^+\cup \{+\infty\}$ is such that $2\eta\geq \min(d^oA,d^oB)$ \textcolor{red}{with $d^o A$ the largest harmonic (non vanishing Fourier coefficient) of $A$}.
\end{theorem}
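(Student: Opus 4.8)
The plan is to reduce everything to a scalar (single-Toeplitz) identity and then lift it to the block case by linearity. First I would fix the centered indexing $[\mathcal{T}(a)]_{ij}=a_{i-j}$ for $i,j\in\mathbb{Z}$ suggested by the displayed matrix, so that the $m$-truncation corresponds to $i,j\in\{-m,\dots,m\}$. The starting point is the fact recalled just before the statement: the product of two infinite Toeplitz matrices is the Toeplitz matrix of the product function. Indeed, since $c=ab$ has Fourier coefficients $c_k=\sum_{l\in\mathbb{Z}}a_{k-l}b_l$, one checks directly that $[\mathcal{A}\mathcal{B}]_{ij}=\sum_{l}a_{i-l}b_{l-j}=c_{i-j}=[\mathcal{T}(c)]_{ij}$. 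This is where the infinite dimension is essential; truncating first destroys the Toeplitz structure precisely because the convolution sum gets cut.

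Second, I would compute the entrywise defect for $i,j\in\{-m,\dots,m\}$:
\[
[\mathcal{T}_m(c)-\mathcal{T}_m(a)\mathcal{T}_m(b)]_{ij}=\sum_{l>m}a_{i-l}b_{l-j}+\sum_{l<-m}a_{i-l}b_{l-j},
\]
i.e. exactly the convolution terms that the truncated product omits. Splitting into the two tails is the crux. For $l<-m$ the index $i-l$ is positive and $l-j$ is negative, so this tail only sees $a^+$ and $b^-$; reindexing $l=-m-k$ matches $\sum_{k\ge1}a_{i+m+k}b_{-m-k-j}$ term by term with the $(i,j)$ entry of $\mathcal{H}_{(m,\eta)}(a^+)\mathcal{H}_{(\eta,m)}(b^-)$ after shifting to $1$-based indices. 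For $l>m$ one gets $a^-$ and $b^+$, and reindexing $l=m+k$ gives $\sum_{k\ge1}a_{i-m-k}b_{m+k-j}$; here the natural indexing of $\mathcal{H}(a^-),\mathcal{H}(b^+)$ runs in the reverse order, so conjugating by the flip $J_m$ (which sends a centered index to its negative) is exactly what realigns the two, and this is the origin of the $\mathcal{J}_{n,m}\cdots\mathcal{J}_{n,m}$ sandwich.

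Third, I would justify replacing the infinite Hankel sums by their width-$(2\eta+1)$ truncations. A term at reindex $k$ in either tail is a product of one Fourier coefficient of $a$ and one of $b$; after establishing the positivity/negativity of the two indices it can be nonzero only if both coefficients lie within their respective degrees, and maximizing over the admissible rows/columns $i,j\in\{-m,\dots,m\}$ forces $k\le\min(d^oA,d^oB)$. Hence every term with $k>\min(d^oA,d^oB)$ vanishes, and the condition $2\eta\ge\min(d^oA,d^oB)$ guarantees the finite Hankel blocks already contain all nonzero contributions (the case $\eta=+\infty$ being the untruncated identity, where convergence is inherited from the well-definedness of $\mathcal{A}\mathcal{B}$). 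I expect this $\min$ bookkeeping to be the main obstacle: it is tempting to bound the cutoff by $\max$, and one has to notice that vanishing of either factor kills the product, so the smaller degree controls where the tails terminate.

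Finally, I would lift to the block setting. Writing $\mathcal{A}=(\mathcal{T}(a_{rt}))_{r,t}$ and $\mathcal{B}=(\mathcal{T}(b_{ts}))_{t,s}$, the $(r,s)$ block of $\mathcal{T}_m(A)\mathcal{T}_m(B)$ is $\sum_t \mathcal{T}_m(a_{rt})\mathcal{T}_m(b_{ts})$; applying the scalar identity to each summand and using $\sum_t a_{rt}b_{ts}=c_{rs}$ collapses the Toeplitz parts to $\mathcal{T}_m(c_{rs})$. Since the Hankel block matrices are defined blockwise and $\mathcal{J}_{n,m}=Id_n\otimes J_m$ is block diagonal, so that it commutes with the block-sum over $t$ and acts as $J_m$ on each block, the two tail sums assemble exactly into $\mathcal{H}_{(m,\eta)}(A^+)\mathcal{H}_{(\eta,m)}(B^-)$ and $\mathcal{J}_{n,m}\mathcal{H}_{(m,\eta)}(A^-)\mathcal{H}_{(\eta,m)}(B^+)\mathcal{J}_{n,m}$, which yields \eqref{ee2}.
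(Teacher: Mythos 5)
Your proof is correct. Note that the paper does not actually prove Theorem~\ref{product}: it is stated as a result imported from \cite{Pierre2022}, so there is no in-paper argument to compare against. Your route --- computing the entrywise defect $\sum_{|l|>m}a_{i-l}b_{l-j}$ of the truncated product against $\mathcal{T}_m(c)$, identifying the two tails $l<-m$ and $l>m$ with the corner products $\mathcal{H}_{(m,\eta)}(a^+)\mathcal{H}_{(\eta,m)}(b^-)$ and $J_m\mathcal{H}_{(m,\eta)}(a^-)\mathcal{H}_{(\eta,m)}(b^+)J_m$ after the index shift and flip, observing that each tail term is a product of one coefficient of $a$ and one of $b$ so that the cutoff is governed by $\min(d^oA,d^oB)$ rather than the $\max$, and then lifting blockwise using that $\mathcal{J}_{n,m}=Id_n\otimes J_m$ acts block-diagonally --- is the natural finite-section analogue of the classical identity $\mathcal{T}(ab)=\mathcal{T}(a)\mathcal{T}(b)+\mathcal{H}(a^+)\mathcal{H}(b^-)+\cdots$ and is almost certainly the same argument as in the cited reference; all the index bookkeeping checks out.
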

\begin{figure}\begin{center}
		\includegraphics[scale=0.2]{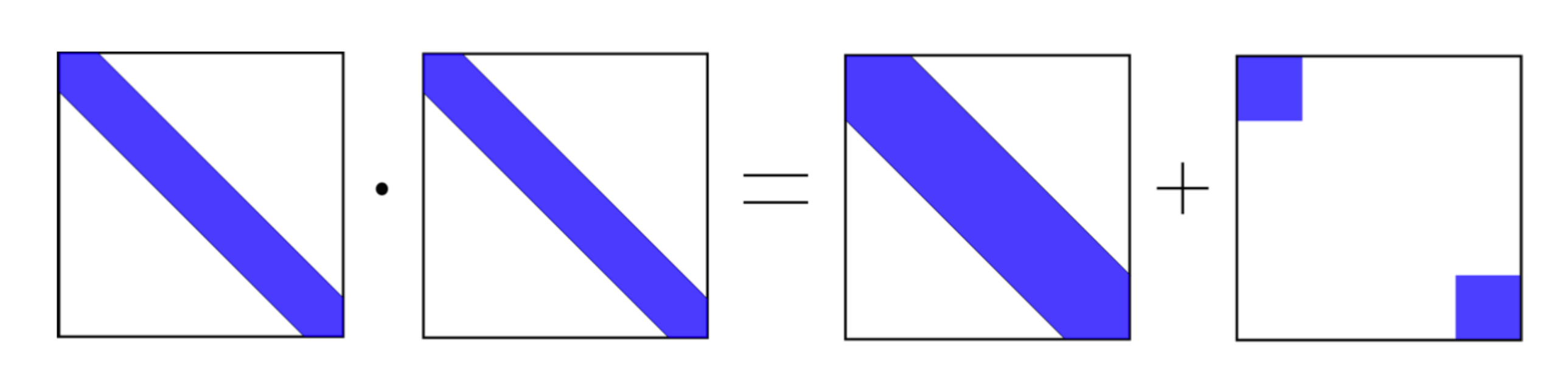}
		\caption{Multiplication of two finite dimensional banded Toeplitz matrices}\label{fig1}
	\end{center}
\end{figure}
An illustration of the above theorem is given in Fig.~\ref{fig1} for $n=1$ when $d^oa$ and
$d^ob$ are less than $m$ so that $\mathcal{T}_m(a)$ and $\mathcal{T}_m(b)$
are banded.
In this case, the matrices $E^+=\mathcal{H}_{(m,m)}(a^+)\mathcal{H}_{(m,m)}(b^-)$ and $E^-= J_m\mathcal{H}_{(m,m)}(a^-)\mathcal{H}_{(m,m)}(b^+) J_m$ have disjoint supports located in the upper leftmost corner and in the lower rightmost corner, respectively. As a consequence,
$\mathcal{T}_m(a)\mathcal{T}_m(b)$ can be represented as the sum of $\mathcal{T}_m(c)$ and two correcting terms $E^+$ and $E^-$.
 \color{black}
\subsection{Sliding Fourier decomposition and harmonic modeling}
Consider $x\in L^{2}_{loc}(\mathbb{R},\mathbb{C})$ a complex valued function of time. Its sliding Fourier decomposition over a window of length $T$ is defined by the time-varying infinite sequence $X=\mathcal{F}(x)\in C^a(\mathbb{R},\ell^2(\mathbb{C}))$ (see \cite{Blin}) whose components satisfy:
$$X_{k}(t)=\frac{1}{T}\int_{t-T}^t x(\tau)e^{-\textsf{j}\omega k \tau}d\tau$$ for $k\in \mathbb{Z}$, with $\omega=\frac{2\pi}{T}$.
If $x=(x_1,\cdots,x_n)\in L^{2}_{loc}(\mathbb{R},\mathbb{C}^n)$ is a complex valued vector function, then
$$X=\mathcal{F}(x)=(\mathcal{F}(x_1), \cdots,\mathcal{F}(x_n)).$$
The vector $X_k=(X_{1,k}, \cdots, X_{n,k})$ with $$X_{i,k}(t)=\frac{1}{T}\int_{t-T}^t x_i(\tau)e^{-\textsf{j}\omega k \tau}d\tau$$
is called the $k-$th phasor of $X$. 
\begin{definition}\label{H} We say that $X$ belongs to $H$ if $X$ is an absolutely continuous function (i.e $X\in C^a(\mathbb{R},\ell^2(\mathbb{C}^n))$ and fulfils for any $k$ the following condition: \begin{equation*}\dot X_k(t)=\dot X_0(t)e^{- \textsf{j}\omega k t} \ a.e.\end{equation*}
\end{definition}
Similarly to the Riesz-Fisher theorem which establishes a one-to-one correspondence between the spaces $L^2$ and $\ell^2$, the following theorem establishes a one-to-one correspondence between the spaces $L_{loc}^2$ and $H$ (see \cite{Blin}).
\begin{theorem}\label{coincidence}For a given $X\in L_{loc}^{\infty}(\mathbb{R},\ell^2(\mathbb{C}^n))$, there exists a representative $x\in L^2_{loc}(\mathbb{R},\mathbb{C}^n)$ of $X$, i.e. $X=\mathcal{F}(x)$, if and only if $X \in H$.
\end{theorem}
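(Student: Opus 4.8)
The plan is to prove the two implications separately, and to arrange the argument so that the (easy) necessity direction does most of the work in the (delicate) converse. Throughout I write $\omega=2\pi/T$, so that $e^{-\textsf{j}\omega k T}=e^{-\textsf{j}2\pi k}=1$ for every $k\in\mathbb{Z}$; this elementary periodicity identity is what makes the phasor dynamics close, and it is used repeatedly.

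For necessity ($X=\mathcal{F}(x)\Rightarrow X\in H$), first I would fix $t$ and observe that, by construction, $(X_k(t))_{k}$ is exactly the Fourier-coefficient sequence of the restriction $x|_{[t-T,t]}\in L^2([t-T,t])$; Parseval then gives $\sum_k|X_k(t)|^2=\frac1T\int_{t-T}^t|x|^2<\infty$, so $X(t)\in\ell^2$, and continuity of $t\mapsto X(t)$ in $\ell^2$ follows from $L^2$-continuity of the windowed signal. The core computation is the time derivative: since $x\,e^{-\textsf{j}\omega k\cdot}\in L^1_{loc}$, each $X_k$ is absolutely continuous and differentiation of the window integral yields, for a.e. $t$,
\[
\dot X_k(t)=\tfrac1T\big(x(t)e^{-\textsf{j}\omega k t}-x(t-T)e^{-\textsf{j}\omega k(t-T)}\big)=\tfrac1T\big(x(t)-x(t-T)\big)e^{-\textsf{j}\omega k t},
\]
using the periodicity identity above. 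Taking $k=0$ identifies the bracket as $T\dot X_0(t)$, so $\dot X_k(t)=\dot X_0(t)e^{-\textsf{j}\omega k t}$ a.e., which is precisely the defining relation of $H$.

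For sufficiency ($X\in H\Rightarrow X=\mathcal{F}(x)$) I would construct $x$ explicitly rather than glue windowed reconstructions directly. The first step is to recover the time-domain data $g:=\dot X_0$: a Riesz--Fisher argument on a single window shows that $X\in H$ silently forces $g\in L^2_{loc}$. Indeed $X_k(t_0+T)-X_k(t_0)=\int_{t_0}^{t_0+T}\dot X_0(s)e^{-\textsf{j}\omega k s}\,ds$ is $T$ times the $k$-th Fourier coefficient of $g$ on $[t_0,t_0+T]$, and since both $X(t_0),X(t_0+T)\in\ell^2$ this coefficient sequence is square-summable, whence $g\in L^2([t_0,t_0+T])$ for every $t_0$. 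I would then fix a base window $[t_0-T,t_0]$, define $x$ there by the $L^2$-convergent series $x(\tau)=\sum_k X_k(t_0)e^{\textsf{j}\omega k\tau}$ (Riesz--Fisher, since $(X_k(t_0))_k\in\ell^2$), and extend $x$ to all of $\mathbb{R}$ by the recursion $x(t):=x(t-T)+T\,\dot X_0(t)$. Because $g\in L^2_{loc}$, this produces a well-defined $x\in L^2_{loc}(\mathbb{R},\mathbb{C}^n)$.

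It remains to verify $\mathcal{F}(x)=X$, and here the necessity direction pays off. Writing $Y:=\mathcal{F}(x)$, necessity applied to $x\in L^2_{loc}$ gives $\dot Y_0(t)=\frac1T(x(t)-x(t-T))$, which by the very recursion defining $x$ equals $\dot X_0(t)$; hence $\dot Y_k(t)=\dot Y_0(t)e^{-\textsf{j}\omega k t}=\dot X_0(t)e^{-\textsf{j}\omega k t}=\dot X_k(t)$ for a.e. $t$. Thus each $Y_k-X_k$ is an absolutely continuous function with vanishing derivative, hence constant; and on the base window the $k$-th Fourier coefficient of $x$ is $X_k(t_0)$ by construction, so $Y_k(t_0)=X_k(t_0)$ and the constant is zero. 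Therefore $Y_k\equiv X_k$ for all $k$, i.e. $\mathcal{F}(x)=X$. The main obstacle, I expect, is the sufficiency direction, and specifically the realization that the data of $H$ is self-consistent precisely because of the two facts above: that $\dot X_0\in L^2_{loc}$ is not an extra hypothesis but is encoded in $X(t)\in\ell^2$ through Riesz--Fisher, and that the relation $\dot X_k=\dot X_0 e^{-\textsf{j}\omega k\cdot}$ is exactly the compatibility condition letting a single $L^2_{loc}$ signal reproduce every phasor simultaneously. A more naive route that reconstructs $x$ on each window and tries to glue runs into the divergent series $\sum_k e^{\textsf{j}\omega k(\tau-s)}$ (a Dirac comb), which the recursion-based argument cleanly avoids.
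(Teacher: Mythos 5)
The paper does not actually prove this statement: Theorem~\ref{coincidence} is quoted from the reference \cite{Blin}, so there is no in-paper proof to compare against. Judged on its own, your argument is essentially correct and complete. The necessity direction (Parseval on the sliding window, absolute continuity of each $X_k$ from $x e^{-\textsf{j}\omega k\cdot}\in L^1_{loc}$, and the cancellation $e^{\textsf{j}\omega kT}=1$ giving $\dot X_k=\dot X_0 e^{-\textsf{j}\omega k\cdot}$) is the standard computation. The sufficiency direction is the nontrivial part and your structure is sound: extracting $\dot X_0\in L^2_{loc}$ from the increments $X_k(t_0+T)-X_k(t_0)$, building $x$ by Riesz--Fischer on one window plus the recursion $x(t)=x(t-T)+T\dot X_0(t)$, and then closing the loop by applying necessity to the constructed $x$ so that $Y_k-X_k$ is constant and vanishes at $t_0$.

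Two small points you should make explicit. First, Definition~\ref{H} requires $X\in C^a(\mathbb{R},\ell^2(\mathbb{C}^n))$, i.e.\ absolute continuity as an $\ell^2$-valued map, not merely componentwise; this follows from the Bessel-type estimate
\begin{equation*}
\sum_{k}|X_k(t)-X_k(s)|^2\;\le\;\frac{1}{T}\int_s^t \big|x(\tau)-x(\tau-T)\big|^2\,d\tau,\qquad |t-s|\le T,
\end{equation*}
but it deserves a line. Second, in the step ``$g:=\dot X_0\in L^2_{loc}$'': you only know a priori that $g\in L^1_{loc}$ (as the derivative of an absolutely continuous function), and that its Fourier coefficients on a window are square-summable. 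To conclude $g\in L^2$ you need to invoke uniqueness of Fourier coefficients for $L^1$ functions (compare $g$ with the Riesz--Fischer representative of its coefficient sequence). With these two clarifications the proof is complete.
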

\textcolor{red}{Thanks to Theorem~\ref{coincidence}, it is established in \cite{Blin} that any system having solutions in Carath\'eodory sense can be transformed by a sliding Fourier decomposition into an infinite dimensional system for which a one-to-one correspondence between their respective trajectories is established providing that the trajectories in the infinite dimensional space belong to the subspace $H$. Moreover, when a $T-$periodic system is considered, the resulting infinite dimensional system is time-invariant.} For instance, consider $T-$periodic functions $A(\cdot)$ and $B(\cdot)$ respectively of class $L^2([0\ T],\mathbb{C}^{n\times n})$ and $L^{\infty}([0\ T],\mathbb{C}^{n\times m})$ and let: 
\begin{align}\dot x(t)=A(t)x(t)+B(t)u(t)\quad x(0)=x_0\label{ltp}\end{align}
If, $x$ is a solution associated to the control $u\in L_{loc}^2(\mathbb{R},{\mathbb{C}^m)}$ of the linear time periodic (LTP) system ~(\ref{ltp}) then, $X=\mathcal{F}(x)$ is a solution associated to $U=\mathcal{F}(u)$ of the linear time invariant (LTI) system:
\begin{align}
	\dot X(t)=(\mathcal{A}-\mathcal{N})X(t)+\mathcal{B}U(t), \quad X(0)=\mathcal{F}(x)(0) \label{ltih}
\end{align}
where $\mathcal{A}=\mathcal{T}(A)$, $\mathcal{B}=\mathcal{T}(B)$ and 
\begin{equation}\mathcal{N}=Id_n\otimes diag( \textsf{j}\omega k,\ k\in \mathbb{Z})\label{q}\end{equation}
Reciprocally, if $X\in H$ is a solution to \eqref{ltih} with $U\in H$, then their representatives $x$ and $u$
(i.e. $X=\mathcal{F}(x)$ and $U=\mathcal{F}(u)$) are a solution to~\eqref{ltp}. In addition, it is proved in \cite{Blin} that one can reconstruct time trajectories from harmonic ones, that is:
\begin{align*}\label{recos} x(t)&=\mathcal{F}^{-1}(X)(t)=\sum_{k=-\infty}^{+\infty} X_k(t)e^{ \textsf{j}\omega k t}+\frac{T}{2}\dot X_0(t)\end{align*}
where $X_{k}=(X_{1,k}, \cdots, X_{n,k})$ for any $k\in \mathbb{Z}$.
 \textcolor{red}{\remark In this paper, we use a TB matrix representation instead of a more standard Block Toeplitz (BT) matrix representation. The main reason is that it allows to obtain a  structure of the harmonic equations similar to the one in the time domain (see for example \eqref{btop}). This is more suitable for analysis and control design purposes. To obtain a BT structure as in \cite{Blin,Zhou2008}, one has to define $\mathcal{F}$ 
by $X:=\mathcal{F}(x)=( \cdots, X_{-1},X_{0}, X_{1},\cdots)$ where $X_k$ refers to the $k-th$ phasors instead of $X:=\mathcal{F}(x)=(\mathcal{F}(x_1), \cdots,\mathcal{F}(x_n)).$ Obvioulsly, we can always switch from one representation to another by applying a permutation matrix.}
\subsection{Trace operator}
Consider the vectorial space $S^n$ of $T-$periodic, $L^\infty([0\ T])$ and symmetric matrix functions
\textcolor{red}{ and define the scalar product:} 
\begin{align*}
	<M,N>_{S^n}&=\frac{1}{T}\int_0^Ttr(M'(\tau)N(\tau))d\tau\\%\label{sca}
&\textcolor{red}{=\frac{1}{T}\sum_{i,j=1}^n\int_0^TM_{ij}(\tau)N_{ij}(\tau))d\tau,}
\end{align*}
\color{red}
for which it is straightforward to show that the induced norm $\|M\|_{S^n}=<M,M>_{S^n}^\frac{1}{2}$ satisfies:\begin{equation}\|M\|_{S^n}\leq n\|M\|_{L^\infty}.\label{nl1}\end{equation}
We recall (see p.p. 562-574 of \cite{Gohberg} for a detailed proof) that $M\in L^\infty([0\ T])$ if and only if $\mathcal{M=}\mathcal{T}(M)$ is bounded on $\ell^2$  i.e. there exists $\kappa>0$, $$\|\mathcal{M}\|_{\ell^2}=\sup_{\|x\|_{\ell^2}=1}\|\mathcal{M}x\|_{\ell^2}<\kappa,$$  
and that the following equality occurs:
\begin{equation}\|\mathcal{M}\|_{\ell^2}=\|M\|_{L^\infty}.\label{nl2}\end{equation}
Let $S^n_+=\{M\in S^n: M\geq 0\ a.e.\}$ and $S^n_{++}=\{M\in S^n: M> 0\ a.e.\}$. 
It follows that: $M\in S^n_{++}$ if and only if $\mathcal{M}=\mathcal{T}(M)$ is Hermitian, positive definite, TB and bounded on $\ell^2$. 
If $M\in S^n$, as any component of $M$ can be rewritten  using its Fourier series (since $L^\infty([0\ T])\subset L^2([0\ T])$:
$$M_{ij}(t)=\sum_{k\in \mathbb{Z}} m_{ij,k} e^{ \textsf{j}\omega kt}\ a.e.,$$
we have: $<Id,M>_{S^n}=\sum_{i=1}^n m_{ii,0}.$
This allows to define the trace operator for $\mathcal{M=}\mathcal{T}(M)$ as follows.
\begin{definition}\label{trace}The trace operator for bounded operators on $\ell^2$ is defined by
	\begin{equation}
		tr(\mathcal{M})=\sum_{i=1}^n m_{ii,0}.\label{tr}\end{equation}
\end{definition}
Then, $tr(\mathcal{M}^*\mathcal{M})^\frac{1}{2}=\|M\|_{S^n}$ defines an operator-norm that satisfies: $$ tr(\mathcal{M}^*\mathcal{M})^\frac{1}{2}\leq n \|\mathcal{M}\|_{\ell^2}.$$
\color{black}
%Note that if $M\in S^n_{+}$, $tr(\mathcal{M})=0$ implies that \textcolor{red}{$M(t)=0\ a.e$ and thus} $\mathcal{M}=0$. \textcolor{red}{So, it is straightforward to check that .}
\vspace{-.2cm}
\section{\textcolor{red}{Motivations and }problem formulation}
Before stating the problem we are interested in, we give examples of problems where TBLMIs may be encountered. 
First, consider the problem of stability analysis \textcolor{red}{of the LTP system \eqref{ltp}. In the time domain, this reduces to check the feasibility of the following {\it differential} Lyapunov inequality:\begin{equation}\dot P+A'P+PA<0\label{tlyap}
\end{equation} with $P=P'>0\ a.e.$ and $T-$periodic whereas in the harmonic domain, the problem amounts to checking the feasibility of the following harmonic Lyapunov inequality:
\begin{equation}(\mathcal{A}-\mathcal{N})^*\mathcal{P}+\mathcal{P}(\mathcal{A}-\mathcal{N})<0\label{lyap}
\end{equation}
with $\mathcal{P}=\mathcal{P}^*>0$.}
\color{black}
%%%%%%%%%%%%%%%%%%%%%%%%%%%%%%%%%%%%%%%%%
TBLMIs can also be encountered in control design problems. \textcolor{red}{Consider the state feedback design problem for \eqref{ltp} which consists in the determination of a control: $u(t)=-K(t)x(t)$ where $K(\cdot)$ is a $T-$periodic and $L^\infty$ matrix function. This problem can be approached in an equivalent way in the harmonic domain by determining a TB static gain $\mathcal{K}$ bounded on $\ell^2$ such that the control $U=-\mathcal{K}X$
stabilizes the infinite dimensional harmonic system \eqref{ltih}. The time-domain  control is simply obtained from the formula: $$u(t)=-K(t)x(t)=-\mathcal{F}^{-1}(\mathcal{K}X)(t)$$}
The problem reduces to the determination of  a stabilizing state feedback gain $\mathcal{K}=\mathcal{Y}\mathcal{S}^{-1}$
where the \textcolor{red}{TB} matrices $\mathcal{Y}$ and $\mathcal{S}$ are solutions \textcolor{red}{(bounded on $\ell^2$)} of the TBLMI: 
\begin{align*}
(\mathcal{A}-\mathcal{N})\mathcal{S}+\mathcal{S}(\mathcal{A}-\mathcal{N})^*-\mathcal{B}\mathcal{Y}-\mathcal{Y}^*\mathcal{B}^*&<0 
\end{align*}
with $\mathcal{S}=\mathcal{S}^*>0$.
One may also mention the harmonic LQR problem whose solution is obtained by solving the associated 
infinite dimensional convex optimization problem \cite{Wil71}:
\begin{align} 
 &\max_{\scriptsize \mathcal{P}=\mathcal{P}^*>0} tr(\mathcal{P}),\ % \sum_{i=1}^n Z_{ii,0}
\label{op}\\
&\left(\begin{array}{cc}
(\mathcal{A}-\mathcal{N})^*\mathcal{P}+\mathcal{P}(\mathcal{A}-\mathcal{N})+\mathcal{Q} & \mathcal{PB} \\
\mathcal{B}^*\mathcal{P}& \mathcal{R}
\end{array}\right)\geq 0\nonumber
\end{align}
where the trace operator is defined by \eqref{tr} and $\mathcal{Q}$ and $\mathcal{R}$ are the LQR weighting matrices. The matrix gain is given by $\mathcal{K}=\mathcal{R}^{-1}\mathcal{B}^*\mathcal{P}$ where $\mathcal{P}$ is a \textcolor{red}{TB} matrix of infinite dimension and a bounded operator on $\ell^2$; see \cite{Blin} for more details. Now, we define what we call a TBLMI.
\begin{definition} A TBLMI is defined by:
\begin{equation}
	\mathcal{L}(\mathcal{P};\textcolor{red}{\mathcal{A}_s,s\in \mathbb{S}})<0
	\label{LMg}
\end{equation}
\textcolor{red}{where $\mathcal{P}$ is the unknown TB operator, $\mathcal{A}_s,s\in \mathbb{S}$ are given TB operators and $\mathbb{S}$ is a finite set of subscripts. Both $\mathcal{P}$ and  $\mathcal{L}(\mathcal{P};\mathcal{A}_s,s\in \mathbb{S})$ are assumed  to be bounded operator on $\ell^2$.\\ For instance, in \eqref{lyap}, we have two given operators $\mathcal{A}_1 = \mathcal{A}$ and $\mathcal{A}_2 = \mathcal{N}$.} 
\end{definition}
\begin{assumption}\label{bound}
All the entries \textcolor{red}{$\mathcal{A}_s,$ $s\in \mathbb{S}$} are bounded operators on $\ell^2$ \textcolor{red}{(equivalently $A_s\in L^{\infty}$ where  $\mathcal{A}_s=\mathcal{T}(A_s)$)}, except $\mathcal{N}$ given by (\ref{q}) which is not. $\mathcal{N}$ appears with the following \textcolor{red}{TB} form: $\mathcal{N}^*\mathcal{P}+\mathcal{P}\mathcal{N}$ \textcolor{red}{which corresponds to the Toeplitz transformation $\mathcal{T}(\dot P)=-\mathcal{N}^*\mathcal{P}-\mathcal{PN}$ where $\mathcal{P}=\mathcal{T}(P)$ and where $P$ is a $T-$periodic and absolutely continuous matrix function \cite{Blin}.}
\end{assumption}
\color{black}
\textcolor{red}{The problem we tackle in this paper is to determine a solution of the following Convex Optimisation Problem ({\bf COP})}:
\begin{align*}{\bf COP:} &\min_{\mathcal{P}^*=\mathcal{P}>0} tr(\mathcal{P})\text{ subject to:}\\
	&\mathcal{L}(\mathcal{P};\textcolor{red}{\mathcal{A}_{s}, s\in \mathbb{S}})\leq 0
\end{align*}
\textcolor{red}{
%where $\mathcal{C}=\mathcal{T}(C)$ with $C\in S_{++}^n$. 
We assume that this convex optimization problem is feasible and that the optimal solution is unique, bounded on $\ell^2$ and continuous with respect to the entries $\mathcal{A}_s, s\in\mathbb{S}$.
%Note that $tr(\mathcal{CP})$ is well defined since $\mathcal{CP}=(\mathcal{CP})^*>0$ and is bounded on $\ell^2$ if $\mathcal{P}$ is.
%In addition, we assume that $\mathcal{C}$ is invertible which means there exists $\gamma>0$ such that the set $\{t: |det(C(t))|<\gamma\}$ has measure zero (see p.p. 562-573 of \cite{Gohberg}) and $\mathcal{C}^{-1}=\mathcal{T}(C^{-1})$.
%\\
%Handling the infinite dimension nature of this convex optimisation problem is very challenging. The question we answer in this paper, is how to proceed numerically with this kind of infinite dimension problems and obtain a solution up to an arbitrarily small error. 
} 
\vspace{-.05cm}
\section{Main results}
\vspace{-.05cm}
\textcolor{red}{  ${\bf COP}$ is an infinite-dimensional problem in the sense that the dimension of the involved entries and unknowns is infinite. The main objective here is to show how ${\bf COP}$ can be solved up to an arbitrarily small error. The main results are presented in three steps. The first one defines what is a truncated TBLMI. The second step shows how to combine truncation and banded approximation operations in order to obtain a finite-dimensional problem. The third step proves that the solution of ${\bf COP}$ can be recovered up to an arbitrary small error by solving a finite optimization problem. }
\vspace{-.2cm}
\subsection{Truncation operator $\Pi$}
\begin{definition}\label{proj} Consider infinite-dimensional \textcolor{red}{TB} matrices $\mathcal{A}$ and $\mathcal{B}$ of compatible size. The truncation operator $\Pi_m$ at order $m$ is determined by:
\begin{align}
&\Pi_m(\mathcal{A}) =\mathcal{T}_m(A),\nonumber\\
&\Pi_m(\mathcal{A}+\mathcal{B})=\Pi_m(\mathcal{A}) +\Pi_m(\mathcal{B})\nonumber\\
&\Pi_m(\mathcal{AB})= \Pi_m(\mathcal{A})\Pi_m(\mathcal{B})+\mathcal{H}_{(m,\eta)}(A^+)\mathcal{H}_{(\eta,m)}(B^-) \nonumber  \\&\qquad \qquad \quad+\mathcal{J}_{n,m} \mathcal{H}_{(m,\eta)}(A^-)\mathcal{H}_{(\eta,m)}(B^+)\mathcal{J}_{n,m}\label{pro}
\end{align}
where $\eta\in \mathbb{Z}^+\cup \{+\infty\}$ is such that $2\eta\geq \min \textcolor{red}{(d^oA,d^oB)}$.
\end{definition}
For a given $m$, the $m-$truncated TBLMI of \eqref{LMg} is:
 \begin{equation}\Pi_m(\mathcal{L}(\mathcal{P};\textcolor{red}{\mathcal{A}_{s}, s\in \mathbb{S}}))<0\label{lmi_trunc}\end{equation}
For example, the $m-$truncated TBLMI associated to \eqref{lyap}~is: 
\begin{align}&\textcolor{red}{\Pi_m((\mathcal{A}-\mathcal{N})^*)\Pi_m(\mathcal{P})+\Pi_m(\mathcal{P})\Pi_m(\mathcal{A}-\mathcal{N})}\nonumber\\
& +\mathcal{H}_{(m,\eta)}(A^{*+})\mathcal{H}_{(\eta,m)}(P^-)+\mathcal{H}_{(m,\eta)}(P^{+})\mathcal{H}_{(\eta,m)}(A^{-})\nonumber\\
 &+\mathcal{J}_{n,m} (\mathcal{H}_{(m,\eta)}(A^{*-})\mathcal{H}_{(\eta,m)}(P^+)\nonumber\\&+ \mathcal{H}_{(m,\eta)}(P^-)\mathcal{H}_{(\eta,m)}(A^+))\mathcal{J}_{n,m}<0\nonumber
\end{align}
with $\eta\in \mathbb{Z}^+\cup \{+\infty\}$ is such that $2\eta\geq \min  \textcolor{red}{(d^oA,d^oP)}$.
\begin{theorem}\label{sol_trunc}If $\mathcal{P}$ solves the infinite-dimensional TBLMI (\ref{LMg}) then $\mathcal{P}$ 
solves the truncated TBLMI (\ref{lmi_trunc}) at any order~$m$.
\end{theorem}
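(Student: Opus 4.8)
The plan is to reduce the claim to the elementary fact that a principal submatrix of a negative-definite operator is negative definite. The first and decisive step is to recognize that $\Pi_m$ was built so that, applied to any expression assembled from TB operators by sums and products, it returns exactly the $m$-truncation $\mathcal{T}_m(L)$ of the resulting TB matrix $\mathcal{T}(L) := \mathcal{L}(\mathcal{P};\mathcal{A}_s,s\in\mathbb{S})$. The additive rule in \eqref{pro} gives $\Pi_m(\mathcal{A}+\mathcal{B}) = \mathcal{T}_m(A)+\mathcal{T}_m(B) = \mathcal{T}_m(A+B)$, and for a single product $\mathcal{C}=\mathcal{AB}$ the Hankel correction terms in \eqref{pro} are precisely those of Theorem~\ref{product}, whence $\Pi_m(\mathcal{AB}) = \mathcal{T}_m(C)$. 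Using associativity of the infinite-dimensional TB product and iterating these two rules over the syntactic structure of $\mathcal{L}$, I would prove by induction that $\Pi_m(\mathcal{L}(\mathcal{P};\mathcal{A}_s,s\in\mathbb{S})) = \mathcal{T}_m(L)$.

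A point requiring care is the unbounded operator $\mathcal{N}$. By Assumption~\ref{bound} it enters only through the combination $\mathcal{N}^*\mathcal{P}+\mathcal{P}\mathcal{N} = -\mathcal{T}(\dot P)$, which is itself a bounded TB matrix; substituting this TB form before applying $\Pi_m$ keeps every intermediate object a genuine TB matrix, so the induction above goes through unchanged. Having identified $\Pi_m(\mathcal{L})$ with $\mathcal{T}_m(L)$, I would next note that $\mathcal{T}_m(L)$ is the principal submatrix of $\mathcal{T}(L)$ obtained by keeping the rows and columns indexed by the phasors $k\in\{-m,\dots,m\}$ in each of the $n$ blocks; equivalently, it is the compression $\iota_m^*\,\mathcal{T}(L)\,\iota_m$, where $\iota_m$ is the zero-padding embedding of $\mathbb{C}^{(2m+1)n}$ into $\ell^2(\mathbb{C}^n)$.

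The conclusion is then immediate: since $\mathcal{P}$ solves \eqref{LMg}, the bounded operator $\mathcal{T}(L) = \mathcal{L}(\mathcal{P};\mathcal{A}_s,s\in\mathbb{S})$ is negative definite on $\ell^2$, so for every nonzero $v\in\mathbb{C}^{(2m+1)n}$ we have $v^*\mathcal{T}_m(L)v = (\iota_m v)^*\mathcal{T}(L)(\iota_m v) < 0$ because $\iota_m v\neq 0$. Hence $\Pi_m(\mathcal{L}(\mathcal{P};\mathcal{A}_s,s\in\mathbb{S})) = \mathcal{T}_m(L) < 0$, which is exactly \eqref{lmi_trunc}. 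I expect the only real obstacle to be the bookkeeping in the first step: verifying that the recursively defined $\Pi_m$ reproduces the truncation of the fully expanded product and that the Hankel correction terms telescope correctly through nested products. This is where Theorem~\ref{product} and the associativity of the TB product carry the whole argument; once this identification is secured, the definiteness conclusion is the standard compression argument and is essentially free.
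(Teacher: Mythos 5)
Your proposal is correct and follows essentially the same route as the paper's proof: identify $\Pi_m(\mathcal{L}(\mathcal{P};\mathcal{A}_s,s\in\mathbb{S}))$ with the exact principal submatrix of the infinite-dimensional operator $\mathcal{L}(\mathcal{P};\mathcal{A}_s,s\in\mathbb{S})$ via Theorem~\ref{product} and Definition~\ref{proj}, then invoke the fact that a principal compression of a negative-definite operator is negative definite. You simply spell out in more detail (the induction over the syntactic structure of $\mathcal{L}$, the treatment of $\mathcal{N}$, and the explicit $\iota_m^*\,\mathcal{T}(L)\,\iota_m$ argument) what the paper states in compressed form.
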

\begin{proof} Consider a solution $\mathcal{P}$ to \eqref{LMg} then for any $m>0$, the principal submatrix $\Pi_m(\mathcal{L}(\mathcal{P};\textcolor{red}{\mathcal{A}_s,s\in \mathbb{S}}))$ \textcolor{red}{of $\mathcal{L}(\mathcal{P};\textcolor{red}{\mathcal{A}_s,s\in \mathbb{S}})$} is necessarily negative definite. Moreover, \textcolor{red}{thanks} to \eqref{ee2} and to Definition~\ref{proj},  $\Pi_m(\mathcal{L}(\mathcal{P};\textcolor{red}{\mathcal{A}_s,s\in \mathbb{S}}))$ \textcolor{red}{can be explicitly developed without any approximation.} Therefore, a solution to the obtained $m-$truncated TBLMI can be deduced from $\mathcal{P}$ itself.
\end{proof}
\textcolor{red}{Consequently}, if the infinite-dimensional TBLMI (\ref{LMg}) is feasible then there always exists a solution to the truncated TBLMI (\ref{lmi_trunc}) at any order $m$. 
Unfortunately, \eqref{lmi_trunc} may contain terms involving infinite-dimensional Hankel matrices (when $\eta=+\infty$ in \eqref{pro}).  \textcolor{red}{The next section shows how this infinite-dimensional problem can be reduced to a finite one by considering banded approximation of the entries.}
\subsection{Truncated and banded approximation of \textcolor{red}{TBLMI}}
 \textcolor{red}{The aim of this part is to show that \eqref{LMg} can be approximated by a banded version (see \eqref{blmi}) whose $m$-truncation (see \eqref{lmi_trunc3}) is now tractable numerically since only a finite number of unknowns must be taken into account.}
\color{red} To this end, we define in the sequel, for any \textcolor{red}{TB} operator $\mathcal{A}$, 
 its $p-$banded version denoted by $\mathcal{A}_{b(p)}$ and obtained by deleting all its phasors of order higher than $p$.
%\begin{theorem}\label{conv_l2}Assume that $\mathcal{A}$ is a bounded operator on $\ell^2$. The operator $\mathcal{A}_{b(p)}$ converges to $\mathcal{A}$ in $\ell^2$-operator norm i.e.
%$$\lim_{p\rightarrow +\infty}\|\mathcal{A}-\mathcal{A}_{b(p)}\|_{\ell^2}=0$$
%\end{theorem}
%\begin{proof}
%
%\end{proof}
%This result allows to replace any TBLMI by its banded version as stated in the following theorem.
\begin{theorem}\label{tt1}The following results hold true:
\begin{enumerate}
\item Assume that $\mathcal{A}$ is a bounded operator on $\ell^2$. The operator $\mathcal{A}_{b(p)}$ converges to $\mathcal{A}$ in $\ell^2$-operator norm i.e.
$$\lim_{p\rightarrow +\infty}\|\mathcal{A}-\mathcal{A}_{b(p)}\|_{\ell^2}=0$$

\item Under Assumption \ref{bound}, if ${\mathcal{P}}$ is a solution to \eqref{LMg} then there exists $p_0$ such that for any $p\geq p_0$, 
%$$\|\mathcal{L}(\mathcal{P};\textcolor{red}{\mathcal{A}_{s}, s\in \mathbb{S}})-\mathcal{L}(\mathcal{P};\textcolor{red}{\mathcal{A}_{s_{b(p)}}, s\in \mathbb{S}})\|_{\ell^2}<\epsilon$$
%which implies that $\mathcal{P}$ satisfies the $p-$banded TBLMI:
\begin{equation}\mathcal{L}({\mathcal{P}};\textcolor{red}{\mathcal{A}_{s_{b(p)}}, s\in \mathbb{S}})<0 \label{blmi}\end{equation} %for sufficiently small $\epsilon$.
\item For given $p$ and $m$, the $m-$truncated and $p-$banded TBLMI: 
\begin{equation} 
\Pi_m(\mathcal{L}(\mathcal{P};\textcolor{red}{\mathcal{A}_{s_{b(p)}}, s\in \mathbb{S}}))<0 \label{lmi_trunc3}.
\end{equation}
 involves  a finite number of unknowns. 
 \end{enumerate}
\end{theorem}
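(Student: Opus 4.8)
The plan is to treat the three assertions in the order stated, since (2) rests on (1) while (3) is a finiteness count once (1)--(2) have fixed the banding order. Throughout I would exploit the isometric dictionary \eqref{nl2} between bounded Toeplitz(-block) operators and their $L^\infty$ symbols, which converts operator-norm statements about the $\mathcal{A}_s$ into uniform statements about the periodic matrix functions $A_s$.

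For (1), I would first note that, since banding deletes the phasors of order $|k|>p$, the operator $\mathcal{A}-\mathcal{A}_{b(p)}$ is again Toeplitz block, with symbol the Fourier tail $A-A_{b(p)}=\sum_{|k|>p}a_k e^{\textsf{j}\omega k t}$. Applying \eqref{nl2} then gives $\|\mathcal{A}-\mathcal{A}_{b(p)}\|_{\ell^2}=\|A-A_{b(p)}\|_{L^\infty}$, so the claim reduces to uniform convergence of the truncated Fourier series of the symbol. The hard part is precisely here, and it is the main obstacle of the whole theorem: for a merely $L^\infty$ (or even continuous) symbol the Dirichlet partial sums need not converge uniformly, so the displayed equality shows that the operator norm can fail to vanish. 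I would therefore close the argument under the regularity natural for the matrices arising from LTP models, namely absolutely summable Fourier coefficients, so that $\|A-A_{b(p)}\|_{L^\infty}\leq\sum_{|k|>p}\|a_k\|\to0$; any modulus-of-continuity hypothesis (Lipschitz, $C^1$) would serve equally. This is the only genuinely analytic step.

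For (2), I would use that a strict operator inequality is an open condition: interpreting $\mathcal{L}(\mathcal{P};\mathcal{A}_s,s\in\mathbb{S})<0$ in the coercive sense $\mathcal{L}(\mathcal{P};\mathcal{A}_s)\leq-\delta\mathcal{I}$ for some $\delta>0$, it suffices to make the banded operator close to the unbanded one in $\ell^2$-operator norm. Since $\mathcal{L}$ is a fixed polynomial expression in finitely many bounded entries $\mathcal{A}_s$ and in the fixed operator $\mathcal{P}$, it depends continuously on those entries by submultiplicativity of the operator norm; combining this with (1) yields $\|\mathcal{L}(\mathcal{P};\mathcal{A}_{s_{b(p)}})-\mathcal{L}(\mathcal{P};\mathcal{A}_s)\|_{\ell^2}\to0$, hence $\mathcal{L}(\mathcal{P};\mathcal{A}_{s_{b(p)}})\leq-\frac{\delta}{2}\mathcal{I}<0$ for all $p\geq p_0$. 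The point requiring care is the unbounded operator $\mathcal{N}$: by Assumption~\ref{bound} it is never banded and enters only through $\mathcal{N}^*\mathcal{P}+\mathcal{P}\mathcal{N}=-\mathcal{T}(\dot P)$, which is bounded and stays fixed, so continuity is needed only with respect to the genuinely bounded Toeplitz entries, to which (1) applies.

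For (3), I would exploit that banding makes each symbol a trigonometric polynomial, so $d^o\mathcal{A}_{s_{b(p)}}=p<\infty$ and one may take $\eta$ in Definition~\ref{proj} finite, any $\eta$ with $2\eta\geq p$ being admissible since $\min(d^o A_{s_{b(p)}},d^o P)\leq p$. With $\eta$ finite, every truncation $\mathcal{T}_m(\cdot)$ and every Hankel correction $\mathcal{H}_{(m,\eta)}(\cdot)$, $\mathcal{H}_{(\eta,m)}(\cdot)$ in \eqref{pro} is a finite matrix, and the unknown $\mathcal{P}=\mathcal{T}(P)$ enters $\Pi_m(\mathcal{L}(\mathcal{P};\mathcal{A}_{s_{b(p)}}))$ only through $\mathcal{T}_m(P)$ and the blocks $\mathcal{H}_{(\eta,m)}(P^\pm)$, $\mathcal{H}_{(m,\eta)}(P^\pm)$. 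A short index count shows these involve only the phasors $p_k$ with $|k|\leq 2\eta+2m+1$, a finite set; as each $p_k$ is a finite $n\times n$ matrix and the inequality is affine in them (with the Hermitian constraints $p_{-k}=p_k^*$), the truncated and banded TBLMI \eqref{lmi_trunc3} is a finite-dimensional LMI. The only work here is the bookkeeping of indices, which is routine given \eqref{pro}.
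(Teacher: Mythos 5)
Your arguments for assertions 2) and 3) coincide with the paper's. For 2) the paper likewise observes that $\mathcal{N}$ is diagonal, hence equal to its own $p$-banded version for every $p$, and then combines continuity of $\mathcal{L}$ with respect to its bounded entries with the openness of the strict inequality and assertion 1). For 3) it likewise reduces every term containing the unknown to the generic form $\mathcal{U}\mathcal{P}\mathcal{V}$ with $\mathcal{U},\mathcal{V}$ banded (so that $\eta$, $\eta_1$ are finite), and counts the finitely many phasors of $P$ that can appear in the truncations and the Hankel corrections of \eqref{pro}. Your index bound $|k|\le 2(m+\eta)+1$ is the same count the paper performs. Nothing to add there.

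For assertion 1) you genuinely depart from the paper, and your version is the more defensible one. The paper's proof passes via \eqref{nl2} to $\|\mathcal{A}-\mathcal{A}_{b(p)}\|_{\ell^2}=\|\sum_{|k|>p}A_k e^{\mathsf{j}\omega kt}\|_{L^\infty}$, argues that the tail tends to $0$ almost everywhere, and then ``takes the limit'' in \eqref{e1} --- i.e. it interchanges $\lim_{p}$ with the essential supremum, which is precisely the illegitimate step you flag. And the statement indeed cannot hold for arbitrary bounded $\mathcal{A}$: each $\mathcal{A}_{b(p)}$ has a trigonometric-polynomial, hence continuous, symbol, so operator-norm (equivalently $L^\infty$) convergence forces $A$ to agree a.e. with a continuous function whose Fourier partial sums converge uniformly; a square-wave symbol --- which occurs in the paper's own illustrative example, entry $a_{11}$ --- already violates this. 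Your repair (absolutely summable Fourier coefficients, or any Lipschitz/Dini-type modulus of continuity) is the standard and correct one. Just be explicit that it is an \emph{added hypothesis}: as written you prove a weaker theorem than the one stated, and the extra regularity must then be propagated to every place where assertion 1) is invoked, namely assertion 2) here and the limits $\hat{\mathcal{P}}_{p}\to\hat{\mathcal{P}}$ and $\hat{\mathcal{P}}_{p_{b(q)}}\to\hat{\mathcal{P}}_{p}$ in the proof of Theorem~\ref{prop}.
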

\color{black}
\begin{proof}
\textcolor{red}{Let us show the first assertion. As $\|\mathcal{A}\|_{\ell^2}=\|A\|_{L^\infty}$ where $\mathcal{A}=\mathcal{T}(A)$ (see \eqref{nl2}) and using the Fourier series of $A$: $$A(t)=\sum_{k\in\mathbb{Z}} A_ke^{ \textsf{j}\omega kt} \ a.e.,$$} we can write: 
\begin{align}
\|\mathcal{A}-\mathcal{A}_{b(p)}\|_{\ell^2}& =\|A-A_{b(p)}\|_{L^\infty}=\|\sum_{|k|>p} A_ke^{ \textsf{j}\omega kt} \|_{L^\infty}\label{e1}
\end{align}
As by assumption there exists a constant $C_1$ such that
\begin{align*}
\|\mathcal{A}\|_{\ell^2}=\|A\|_{L^\infty}
&= \|\sum_{k\in \mathbb{Z}} A_ke^{ \textsf{j}\omega kt} \|_{L^\infty}<C_1
\end{align*}
the series $\sum_{k\in \mathbb{Z}} A_ke^{ \textsf{j}\omega kt}$ converges almost everywhere and $\lim_{p\rightarrow +\infty}\sum_{|k|>p} A_ke^{ \textsf{j}\omega kt}=0\ a.e.$ 
Taking the limit w.r.t. $p$ in \eqref{e1} leads to the result. \\
\textcolor{red}{Now for assertion 2),} by Assumption~\ref{bound}, the only entry of the TBLMI not bounded on $\ell^2$ is $\mathcal{N}$. Fortunately, as $\mathcal{N}$ is diagonal, $\mathcal{N}=\mathcal{N}_{{b(p)}}$ for any $p\geq 0$ and thus $\mathcal{N}$ does not play any role.
If ${\mathcal{P}}$ is a solution to \eqref{LMg}, then by assumption $\mathcal{L}({\mathcal{P}};\textcolor{red}{\mathcal{A}_{s}, s\in \mathbb{S}})$ must be a bounded operator on $\ell^2$. LMIs being continuous with respect to their entries, there exists a constant $C_2$ depending of ${\mathcal{P}}$ and \textcolor{red}{$\mathcal{A}_s,$ $s\in \mathbb{S}$} such that for any $p>0$
\begin{align*}\|\mathcal{L}({\mathcal{P}};\textcolor{red}{\mathcal{A}_{s}, s\in \mathbb{S}})-\mathcal{L}({\mathcal{P}};&\textcolor{red}{\mathcal{A}_{s_{b(p)}}, s\in \mathbb{S}})\|_{\ell^2}\\&\leq C_2 \sum_{s\in \textcolor{red}{\mathbb{S}}} \|\mathcal{A}_{s}-\mathcal{A}_{s_{b(p)}}\|_{\ell^2}\end{align*}
From the first assertion, we conclude that for any $\epsilon>0$, there exists $p_0$ such that for $p\geq p_0$, 
$$\|\mathcal{L}({\mathcal{P}};\textcolor{red}{\mathcal{A}_{s}, s\in \mathbb{S}})-\mathcal{L}({\mathcal{P}};\textcolor{red}{\mathcal{A}_{s_{b(p)}}, s\in \mathbb{S}})\|_{\ell^2}<\epsilon$$
and relation \eqref{blmi} follows for sufficiently small $\epsilon$.\\
Finally, to show the last assertion, as all \textcolor{red}{$\mathcal{A}_{s_{b(p)}}$, $s\in \mathbb{S}$} in $\mathcal{L}(\mathcal{P};\textcolor{red}{\mathcal{A}_{s_{b(p)}}, s\in \mathbb{S}})<0$ are banded, only the unknown $\mathcal{P}$ is possibly not banded.
As the product of infinite dimensional banded \textcolor{red}{TB} operators is a banded \textcolor{red}{TB} operator (which is not true in finite dimension), the terms in the TBLMI involving operator $\mathcal{P}$ have the generic form: 
$\mathcal{U}\mathcal{P}\mathcal{V}$ where $\mathcal{U}$ and $\mathcal{V}$ are polynomial functions of banded entries $\textcolor{red}{\mathcal{A}_{s_{b(p)}}, s\in \mathbb{S}}$ and are therefore banded. 
Applying $\Pi_m$ on $\mathcal{L}(\mathcal{P};\textcolor{red}{\mathcal{A}_{s_{b(p)}}, s\in \mathbb{S}})$ leads to compute $\Pi_m(\mathcal{U}\mathcal{P}\mathcal{V})$. Using \eqref{pro}, we have:
\begin{align}
\Pi_m(\mathcal{U}\mathcal{P}\mathcal{V})&= \Pi_m(\mathcal{U})\Pi_m(\mathcal{PV})\label{ee1} \\
&+\mathcal{H}_{(m,\eta_1)}(U^+)\mathcal{H}_{(\eta_1,m)}((PV)^-) \nonumber \\
&+\mathcal{J}_{n,m} \mathcal{H}_{(m,\eta_1)}(U^-)\mathcal{H}_{(\eta_1,m)}((PV)^+)\mathcal{J}_{n,m}\nonumber 
\end{align}
where $\eta_1$ is the first integer greater than $\frac{1}{2} d^o U$ and where $\Pi_m(\mathcal{PV)}$ is determined using \eqref{pro} with $\eta$ the first integer greater than $\frac{1}{2} d^o V$. Noticing that the coefficient of \textcolor{red}{highest} degree invoked in the Hankel matrix $\mathcal{H}_{(m,\eta)}(\cdot)$ is of degree $2(m+\eta)+1$, it is straightforward to check that only a finite number of phasors of $\mathcal{P}$ are necessary to compute both $\Pi_m(\mathcal{PV)}$ and \eqref{ee1} and thus the result follows. %and \eqref{ee2}, the result is established. 
\end{proof}

\vspace{-.1cm}
\subsection{Solving {\bf COP} up to an arbitrary error}
\textcolor{red}{
We are now ready to prove the main result of this paper. To this end, we define three subproblems: the $p-$banded problem ${\bf COP_{p}}$, the fully banded problem ${\bf COP_{p,q}} $ and the $m-$truncated, fully banded ${\bf COP_{p,q,m}}$. The main result states that solving ${\bf COP_{p,q,m}}$ is a consistent scheme allowing to approximate the solution to ${\bf COP}$.}\\ 
Consider for a given $p>0$, the $p-$banded problem is: 
\begin{align*}
{\bf COP_{p}:} &\min_{\mathcal{P}^*=\mathcal{P}>0} tr(\mathcal{P}) \text{ subject to:}\\
&\mathcal{L}(\mathcal{P};\textcolor{red}{\mathcal{A}_{s_{b(p)}}, s\in \mathbb{S}})\leq 0.\nonumber
\end{align*}
For a given $q>0$, the fully banded problem is: 
\begin{align*}&{\bf COP_{p,q}:} \min_{\mathcal{P}^*=\mathcal{P}>0} tr(\mathcal{P}) \text{ subject to:} \\
&\mathcal{L}(\mathcal{P};\textcolor{red}{\mathcal{A}_{s_{b(p)}}, s\in \mathbb{S}})\leq 0, \quad%\nonumber \\
P_{ij,k}=0, |k|>q,\ i,j=1,\cdots,n\nonumber
\end{align*}
\textcolor{red}{where for a given $k\in\mathbb{Z}$, $P_{ij,k}=0$ refers to the $k$th-phasors of the $(i,j)$th block of $\mathcal{P}$.}
For a given $m>0$, the $m-$truncated, fully banded optimization problem is: 
\begin{align*}
&{\bf COP_{p,q,m}:} \min_{\mathcal{P}^*=\mathcal{P}} tr(\mathcal{P}) \text{ subject to:}\quad \Pi_m(\mathcal{P})>0,\\
&\Pi_m(\mathcal{L}(\mathcal{P};\textcolor{red}{\mathcal{A}_{s_{b(p)}}, s\in \mathbb{S}}))\leq 0,\ %\nonumber \\
P_{ij,k}=0, |k|>q,\ i,j=1,\cdots,n.\nonumber
\end{align*}
\textcolor{red}{Noting that $tr(\mathcal{P})$ requires only a finite number of phasors to be evaluated, only $ {\bf COP_{p,q,m}}$ is a finite-dimensional problem.}

We assume that all these convex optimization problems are feasible and that the optimal solution is unique, bounded on $\ell^2$ and continuous with respect to the entries $\mathcal{A}_s$, $s\in \mathbb{S}$. 
\textcolor{red}{\begin{assumption}\label{as2}
For given $p,q,m>0$, any unbounded  sequence on $\ell^2$ of admissible candidates for Problem ${\bf COP_{p,q,m}}$ has an unbounded objective function.\end{assumption}
Given $p,q$ and $m$, we denote by $\hat{\mathcal{P}}$, $\hat{\mathcal{P}}_{p}$, $\hat{\mathcal{P}}_{p,q}$ and $\hat{\mathcal{P}}_{p,q,m}$, the solution to ${\bf COP}$, ${\bf COP_{p}}$, ${\bf COP_{p,q}}$ and ${\bf COP_{p,q,m}}$ respectively.
The next theorem states that solving ${\bf COP_{p,q,m}}$ is a consistent scheme allowing to approximate the solution to ${\bf COP}$.
\begin{theorem}\label{prop} For any $\epsilon>0$, there exist $p$, $q$ and $m_0$ such that  for any $m>m_0$:
\begin{align}
& \|\hat{\mathcal{P}}_{p,q,m}-\hat{\mathcal{P}}\|_{\ell^2}< \epsilon.
\end{align}
\end{theorem}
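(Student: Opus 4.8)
The plan is to control the total error by a triangle inequality through the three intermediate minimizers,
$$\|\hat{\mathcal{P}}_{p,q,m}-\hat{\mathcal{P}}\|_{\ell^2}\le \|\hat{\mathcal{P}}_{p,q,m}-\hat{\mathcal{P}}_{p,q}\|_{\ell^2}+\|\hat{\mathcal{P}}_{p,q}-\hat{\mathcal{P}}_{p}\|_{\ell^2}+\|\hat{\mathcal{P}}_{p}-\hat{\mathcal{P}}\|_{\ell^2},$$
and to make each of the three terms smaller than $\epsilon/3$ by choosing, in this order, first $p$, then $q$ (depending on $p$), then $m_0$ (depending on $p$ and $q$). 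This nesting matches the quantifier structure of the statement, since $p$, $q$ and $m_0$ are all selected after $\epsilon$ is given.

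For the last term I would rely on the continuity of the optimal solution with respect to the entries, which is assumed for ${\bf COP}$. By the first assertion of Theorem~\ref{tt1} the banded data $\mathcal{A}_{s_{b(p)}}$ converge to $\mathcal{A}_{s}$ in $\ell^2$-operator norm as $p\to+\infty$ (the unbounded entry $\mathcal{N}$, being diagonal, is unaffected since $\mathcal{N}=\mathcal{N}_{b(p)}$), so $\hat{\mathcal{P}}_{p}\to\hat{\mathcal{P}}$ and a large enough $p$ makes $\|\hat{\mathcal{P}}_{p}-\hat{\mathcal{P}}\|_{\ell^2}<\epsilon/3$.

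The remaining two terms change the feasible set rather than the data, so I would argue by stability of minimizers under monotone set convergence. For fixed $p$, the feasible set of ${\bf COP_{p,q}}$ is contained in that of ${\bf COP_{p}}$ and grows with $q$, while the $q$-bandings $(\hat{\mathcal{P}}_{p})_{b(q)}$ approximate $\hat{\mathcal{P}}_{p}$ in operator norm by the first assertion of Theorem~\ref{tt1}; this forces the optimal values to converge and, together with uniqueness of the minimizer and the coercivity guaranteed by Assumption~\ref{as2}, yields $\hat{\mathcal{P}}_{p,q}\to\hat{\mathcal{P}}_{p}$, so a suitable $q$ controls the middle term. For the first term I fix $p,q$. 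The key observation is that by Definition~\ref{proj} and Theorem~\ref{product}, $\Pi_m(\mathcal{L}(\mathcal{P};\mathcal{A}_{s_{b(p)}},s\in\mathbb{S}))$ and $\Pi_m(\mathcal{P})$ are exactly the principal $m$-submatrices of $\mathcal{L}(\mathcal{P};\mathcal{A}_{s_{b(p)}},s\in\mathbb{S})$ and of $\mathcal{P}$. Since for a bounded self-adjoint operator all principal submatrices are negative (resp. positive) semidefinite if and only if the operator is, the feasible set of ${\bf COP_{p,q,m}}$ contains that of ${\bf COP_{p,q}}$, shrinks with $m$, and intersects down to it. As all these minimizers are $q$-banded they lie in a common finite-dimensional space; boundedness from Assumption~\ref{as2}, extraction of a convergent subsequence, identification of its limit as feasible for ${\bf COP_{p,q}}$ via the submatrix characterization, and uniqueness together give $\hat{\mathcal{P}}_{p,q,m}\to\hat{\mathcal{P}}_{p,q}$, so a suitable $m_0$ controls the first term.

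The hard part, and the place to handle with care, is the passage from the $\le$ and $>$ constraints to their limits at the boundary of the feasible set: a convergent sequence of near-optimal points satisfies only $\mathcal{L}\le 0$ and $\mathcal{P}\ge 0$ in the limit, so I must ensure strict positivity is preserved (or recover a strictly feasible point by a vanishing perturbation) before uniqueness can be invoked. This strict-feasibility issue, together with the coercivity of Assumption~\ref{as2} that prevents minimizers from escaping to infinity, is what makes the convergence of minimizers --- as opposed to the easier convergence of optimal values --- the crux of the argument; it is also where the paper's central fact enters, namely that because $\Pi_m$ reproduces the exact principal submatrix, truncation never spuriously destroys definiteness and the scheme is consistent.
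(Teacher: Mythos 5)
Your overall architecture is exactly the paper's: the same triangle inequality through $\hat{\mathcal{P}}_{p}$ and $\hat{\mathcal{P}}_{p,q}$, the same quantifier ordering ($p$, then $q(p)$, then $m_0(p,q)$, each term below $\epsilon/3$), the same use of continuity of the optimizer in the data plus assertion 1 of Theorem~\ref{tt1} for the $p$-limit, and essentially the same argument for the $m$-limit (exactness of $\Pi_m$ as a principal submatrix, nested feasible sets, monotone bounded traces, boundedness via Assumption~\ref{as2}, compactness of the common $q$-banded finite-dimensional subspace, uniqueness). Your closing remark about strict feasibility being lost in the limit is a fair observation --- the paper itself passes over it --- but that is not where your sketch actually falls short.

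The genuine gap is in the $q$-step. The sandwich $tr(\hat{\mathcal{P}}_{p_{b(q)}})\geq tr(\hat{\mathcal{P}}_{p,q})\geq tr(\hat{\mathcal{P}}_{p})$ gives convergence of the optimal values, as you say, but ``uniqueness plus coercivity'' does not upgrade this to $\hat{\mathcal{P}}_{p,q}\to\hat{\mathcal{P}}_{p}$ in $\ell^2$-operator norm: unlike the $m$-step, the iterates do not live in a common finite-dimensional subspace (each $\hat{\mathcal{P}}_{p,q}$ is $q$-banded with $q\to\infty$), so a bounded sequence only admits weakly convergent subsequences, and a weak limit identified by uniqueness yields only weak convergence. (Moreover, Assumption~\ref{as2} is stated only for ${\bf COP_{p,q,m}}$, so it cannot be invoked here; the boundedness you need comes from the trace chain itself.) The paper closes this hole with a specific device you are missing: factor $\hat{\mathcal{P}}_{p,q}=\mathcal{Z}_{p,q}^*\mathcal{Z}_{p,q}$ and observe that the objective is exactly the squared norm $tr(\hat{\mathcal{P}}_{p,q})=\langle Z_{p,q},Z_{p,q}\rangle_{S^n}$, so that convergence of the optimal values is convergence of norms; weak convergence of a subsequence combined with convergence of norms in the Hilbert geometry of $S^n$ then gives strong convergence of $Z_{p,q}$ to $Z_p$, hence of $\hat{\mathcal{P}}_{p,q}$ to $\hat{\mathcal{P}}_{p}$, and uniqueness promotes this to the whole sequence. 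Without that identification of the objective as a squared norm (or some substitute for compactness), the middle term of your triangle inequality is not controlled.
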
}
\begin{proof}
\textcolor{red}{Let us show that the following three limits hold:
\begin{align*}
&\lim_{p \rightarrow +\infty} \|\hat{\mathcal{P}}_{p}-\hat{\mathcal{P}}\|_{\ell^2}=0, \quad
 \lim_{q \rightarrow +\infty} \|\hat{\mathcal{P}}_{p,q}-\hat{\mathcal{P}}_p\|_{\ell^2}=0,\ p>0\\
& \text{and }\lim_{m \rightarrow +\infty} \|\hat{\mathcal{P}}_{p,q,m}-\hat{\mathcal{P}}_{p,q}\|_{\ell^2}=0, p,q>0.
\end{align*}
The first limit} is a direct consequence of the continuity of the optimal solution with respect to the entries \textcolor{red}{$\mathcal{A}_s$, $s\in \mathbb{S}$} and 1) in Theorem~\ref{tt1}.
%%%%%%%%%%%%%%%%%%%%%%%%%%%%%%%
\color{red}To prove the second limit, for a given $p$, as for any $q$, $\hat{\mathcal{P}}_{p,q}$ is admissible for both ${\bf COP_{p,q+1}}$ and ${\bf COP_{p}}$, it follows necessarily that
\begin{equation}tr(\hat{\mathcal{P}}_{p,q})\geq tr(\hat{\mathcal{P}}_{p,q+1})\geq\cdots \geq tr(\hat{\mathcal{P}}_{p}) >0\label{des}\end{equation}
On the other hand, from 1) in Theorem~\ref{tt1}, it is clear that there exists $q_0$ such that for any $q>q_0$ the $q-$banded operator $\hat{\mathcal{P}}_{p_{b(q)}}$ of $\hat{\mathcal{P}}_{p}$ is positive definite. Thus, $\hat{\mathcal{P}}_{p_{b(q)}}$ is then obviously admissible for  ${\bf COP_{p,q}}$, it follows that: 
\begin{equation}
tr(\hat{\mathcal{P}}_{p_{b(q)}})\geq tr(\hat{\mathcal{P}}_{p,q})\geq tr(\hat{\mathcal{P}}_{p})\label{tg}\end{equation}
Since $\|\hat{\mathcal{P}}_{p_{b(q)}}-\hat{\mathcal{P}}_{p}\|_{\ell^2}\rightarrow0$ when $q\rightarrow+\infty$, taking the limit w.r.t. $q$  in \eqref{tg} leads to:
\begin{equation}\lim_{q\rightarrow+\infty}tr(\hat{\mathcal{P}}_{p,q})= tr(\hat{\mathcal{P}}_{p}).\label{tg2}\end{equation}
 Now, let us show that we also have: $\lim_{q\rightarrow+\infty}\hat{\mathcal{P}}_{p,q}= \hat{\mathcal{P}}_{p}$ on $\ell^2$.
As $ \hat{\mathcal{P}}_{p,q}$ is TB, Hermitian, positive definite and bounded on $\ell^2$, there exists a bounded operator on $\ell^2$, $\mathcal{Z}_{p,q}$ such that the following decomposition holds:$$ \hat{\mathcal{P}}_{p,q}=\mathcal{Z}^*_{p,q}\mathcal{Z}_{p,q} \text{ for any }p,q.$$
Moreover as $\mathcal{Z}_{p,q}$ is a constant matrix function, it belongs trivially in $H$ (see Def.~\ref{H}) and there exists a representative $Z_{p,q}\in L^\infty([0\ T])$ (see \eqref{nl2}) such that
$\mathcal{Z}_{p,q}=\mathcal{T}(Z_{p,q})$.
Using similar arguments, $\hat{\mathcal{P}}_{p}=\mathcal{Z}^*_{p}\mathcal{Z}_{p}$ with $\mathcal{Z}_{p}=\mathcal{T}(Z_{p})$ and $Z_{p}\in L^\infty([0\ T])$. 
Therefore, Def. 2 implies: $$tr( \hat{\mathcal{P}}_{p,q})=tr(\mathcal{Z}^*_{p,q}\mathcal{Z}_{p,q})=<Z_{p,q},Z_{p,q}>_{S^n}$$
and from \eqref{tg2}, it can be concluded that 
\begin{equation}\lim_{q\rightarrow+\infty}<Z_{p,q},Z_{p,q}>_{S^n}=<Z_{p},Z_{p}>_{S^n}.\label{nc}
\end{equation}
Moreover as the sequence $Z_{p,q}$ indexed by $q$ is bounded (see \eqref{des}), there exists a subsequence that converges weakly on $L^\infty$ and 
Eq. \eqref{nc} implies that it also converges strongly and necessarily to $Z_{p}$ by uniqueness of solution. Finally, the uniqueness of the solution implies that the whole sequence converges to  $Z_{p}$. It follows %since $\mathcal{C}$ is assumed to be invertible 
that  $\lim_{q\rightarrow+\infty} \hat {\mathcal{P}}_{q,p}=\hat{\mathcal{P}}_{p}=\mathcal{Z}^*_{p}\mathcal{Z}_{p}$ on $\ell^2$.
Finally to prove the third limit: \color{black} for any $m$ and following similar steps of the proof of Theorem~\ref{sol_trunc} as $\hat{\mathcal{P}}_{p,q}$ is admissible for Problem ${\bf COP_{p,q,m}} $ and $\hat{\mathcal{P}}_{p,q,m+1}$ is admissible for Problem ${\bf COP_{p,q,m}}$ , it follows that:\textcolor{red}{
$$tr(\hat{\mathcal{P}}_{p,q,m})\leq tr(\hat{\mathcal{P}}_{p,q,m+1})\leq \cdots \leq tr(\hat{\mathcal{P}}_{p,q})$$}
which proves that the sequence $tr(\hat{\mathcal{P}}_{p,q,m})$ indexed by $m$ is an increasing and bounded real sequence, and thus a converging sequence. Moreover, for any $m_0$, as the sequence $\hat{\mathcal{P}}_{p,q,m}$, $m\geq m_0$ is admissible for Problem ${\bf COP_{m_0,p,q}}$, following \textcolor{red}{Assumption~\ref{as2},} this sequence is necessarily bounded on $\ell^2$.
Therefore, for any $m\geq m_0$, the phasors of $\hat{\mathcal{P}}_{p,q,m}$ are bounded and belong to a finite dimensional subspace of $\ell^2$ (thanks to the constraints $P_{ij,k}=0 \text{ for } |k|>q$). 
By compactness, there exists a subsequence that converges on this finite subspace of $\ell^2$ and the uniqueness of the solution implies that the whole sequence converges necessarily to $\hat{\mathcal{P}}_{p,q}$.\\
\textcolor{red}{The final result follows since $\forall\epsilon>0$, $\exists p_0$, $\forall p>p_0$, $\exists q_0(p)$, $\forall q>q_0$, $\exists m_0(p,q)$ such that:
$ \|\hat{\mathcal{P}}_{p}-\hat{\mathcal{P}}\|_{\ell^2}\leq \frac{\epsilon}{3}$, $\|\hat{\mathcal{P}}_{p,q}-\hat{\mathcal{P}}_p\|_{\ell^2}\leq \frac{\epsilon}{3}$ and $\forall m>m_0$ $\|\hat{\mathcal{P}}_{p,q,m}-\hat{\mathcal{P}}_{p,q}\|_{\ell^2}\leq \frac{\epsilon}{3}$
and thus it follows that 
\begin{align*}
\|\hat{\mathcal{P}}_{p,q,m}-\hat{\mathcal{P}}\|_{\ell^2}\leq&\|\hat{\mathcal{P}}_{p,q,m}-\hat{\mathcal{P}}_{p,q}\|_{\ell^2}\\&+\|\hat{\mathcal{P}}_{p,q}-\hat{\mathcal{P}}_p\|_{\ell^2}+ \|\hat{\mathcal{P}}_{p}-\hat{\mathcal{P}}\|_{\ell^2}\leq \epsilon\end{align*}}
\end{proof}
\color{black}
\vspace{-.2cm}
\section{Illustrative example}
We consider the example given in \cite{Pierre2022} defined by:
\begin{align}
	\dot x=&\left(\begin{array}{cc}a_{11} (t) & a_{12} (t) \\a_{21} (t) & a_{22} (t)\end{array}\right)x+\left(\begin{array}{c}b_{11}(t) \\0\end{array}\right)u\label{ex_ltp}\end{align}
{\small\begin{align*}a_{11} (t) &=1+\frac{4}{\pi}\sum_{k=0}^{\infty}\frac{1}{2k+1}\sin(\omega (2k+1)t),\\
	a_{12} (t) &= 2+\frac{16}{\pi^2}\sum_{k=0}^{\infty}\frac{1}{(2k+1)^2}\cos(\omega (2k+1)t),\\
	a_{21} (t) &= -1+\frac{2}{\pi}\sum_{k=1}^{\infty}\frac{(-1)^k}{k}\sin(\omega kt+\frac{\pi}{4}),\\
	a_{22} (t) &= 1-2\sin(\omega t)-2\sin(3\omega t)+2\cos(3\omega t)+2\cos(5\omega t),\\
	b_{11}(t)&=1+ 2 \cos(2\omega t)+ 4 \sin(3\omega t) \text{ with }\omega=2\pi.
\end{align*}}
The associated Toeplitz matrix $\mathcal{A}$ has an infinite number of phasors and is not banded. This system is unstable and the equivalent harmonic LTI system \eqref{ltih} has a spectrum provided by the set $\sigma=\{\lambda+ \textsf{j}\omega k, k\in \mathbb{Z}\}$ where $\lambda \in \{1\pm \textsf{j} 1.64\}$ (see \cite{Pierre2022}).

We consider the LQ problem and we solve the optimization problem given by \eqref{op} with $\mathcal{Q}=10^2\mathcal{T}(Id_n)$ and $\mathcal{R}=\mathcal{T}(Id_m)$. Imposing \textcolor{red}{as required }a \textcolor{red}{TB} structure to $\mathcal{P}$, Problem ${\bf COP_{p,q,m}}$ associated to \eqref{op} is solved with $m=10,15,20$ and $p=q=2m$ \textcolor{red}{(Obviously other choices are possible such as for exemple $m=p=q$)}. This is illustrated in Fig.~\ref{f1} where we plot the modulus of phasors of $\mathcal{K}=[\mathcal{K}_1,\mathcal{K}_2]$ \textcolor{red}{and in Fig.~\ref{f4} where the control $u(t):=-K(t)x(t)$ with $K(t)$ the $T-$periodic gain matrix given by $K(t):=\sum_{k=-2m}^{2m} K_k e^{\textsf{j}\omega kt}$, stabilizes globally and asymptotically the unstable LTP system \eqref{ex_ltp}. As a result, we recover the same state feedback gain as in \cite{Pierre2022} which was obtained using a Kleinman-like algorithm.}
\begin{figure}[h]
	\begin{center}
		\includegraphics[width=\linewidth,height=4.7cm]{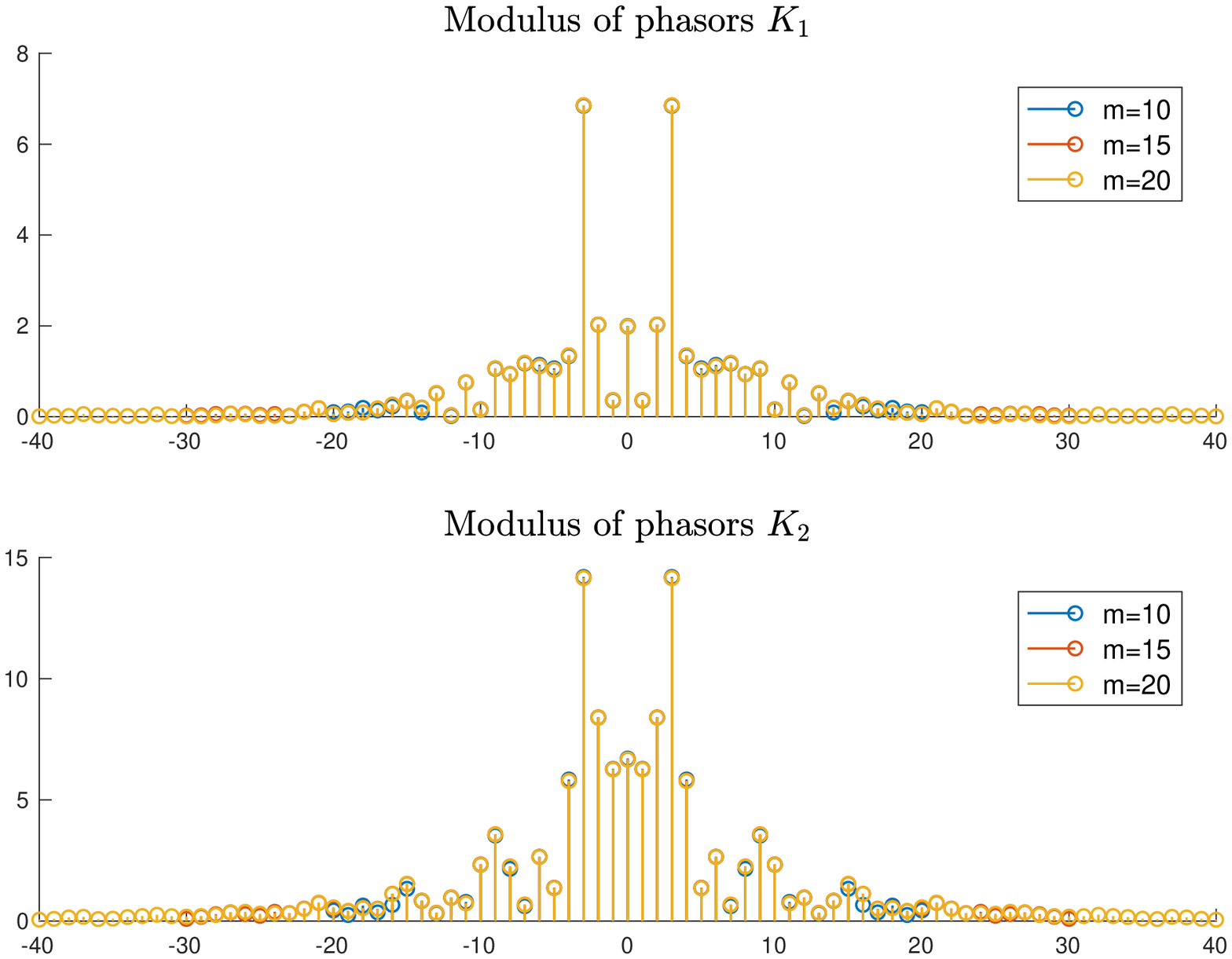}
		\caption{Modulus of Phasors $K=[K_1,K_2]$ (harmonic LQ control)}\label{f1}
	\end{center}
\end{figure}
\begin{figure}[h]
	\begin{center}
		\includegraphics[width=\linewidth,height=5.5cm]{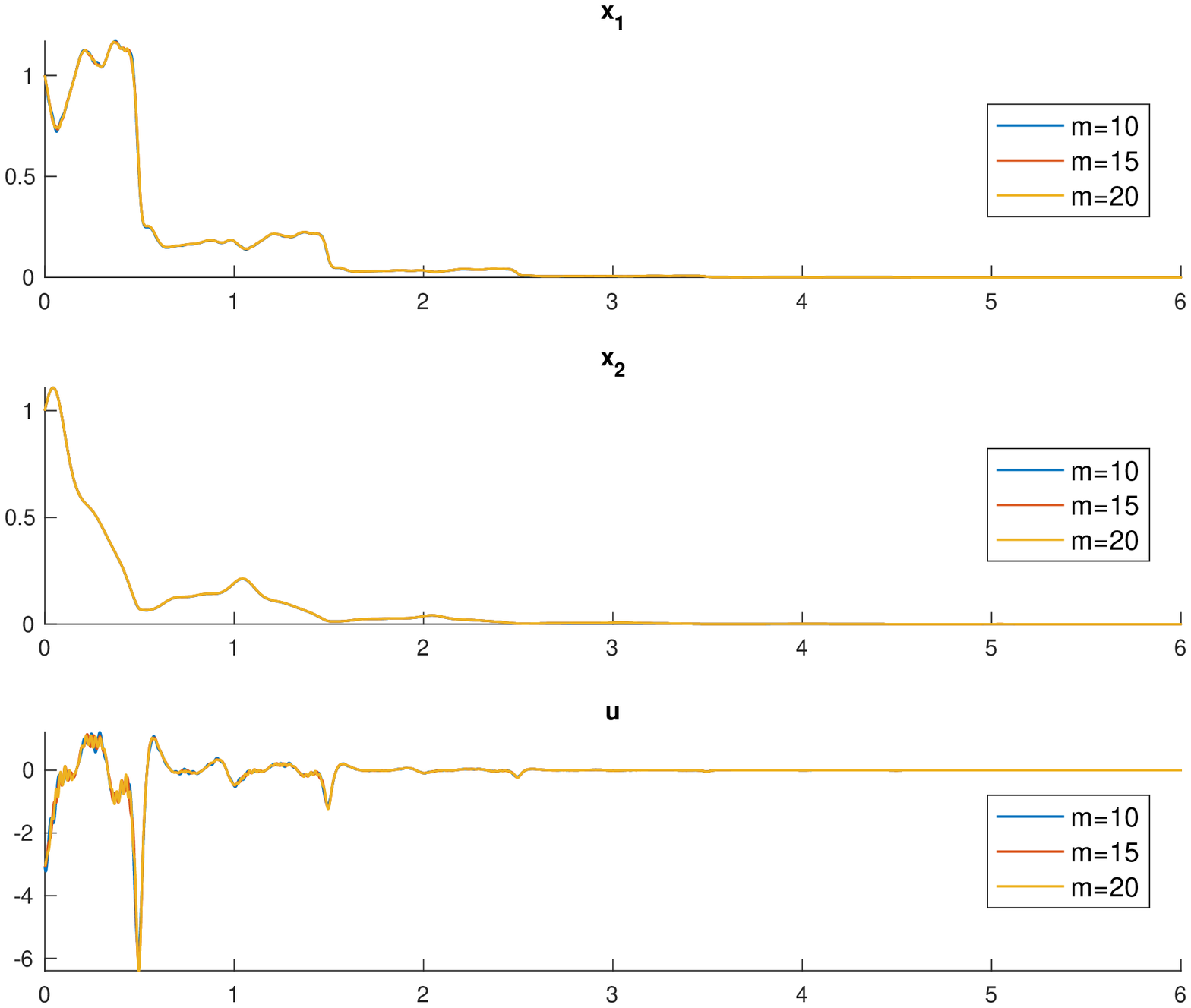}
		\caption{Closed loop response with LQ control}\label{f4}
	\end{center}
\end{figure}
%\addtolength{\textheight}{0cm} % This command serves to balance the column lengths
   % on the last page of the document manually. It shortens
   % the textheight of the last page by a suitable amount.
   % This command does not take effect until the next page
   % so it should come on the page before the last. Make
   % sure that you do not shorten the textheight too much.
%%%%%%%%%%%%%%%%%%%%%%%%%%%%%%%%%%%%%%%%%%%%%%%%%%%%%%%%%%%%%%%%%%%%%%%%%%%%%%%%
\section{Conclusion}
In this paper, we provided a novel approach that allows to solve, up to an arbitrarily small error, infinite dimensional \textcolor{red}{TB}LMIs and \textcolor{red}{some} related convex optimization problems encountered in the analysis and control of dynamical systems in the harmonic framework. The result is based on a well-defined finite dimensional truncated problem that allows to recover the original infinite-dimensional solution up to an arbitrarily small error. This framework is not only useful for robustness and multiobjective optimization issues of LTP systems but also for the analysis and control of more general periodic systems such as periodic polynomial systems. 

\end{document}